\documentclass[runningheads,a4paper]{llncs}

\usepackage{amssymb}
\setcounter{tocdepth}{3}
\usepackage{graphicx}

\usepackage{amsmath}
\usepackage{algorithm}
\usepackage{algorithmic}

\usepackage{url}
\newcommand{\keywords}[1]{\par\addvspace\baselineskip
\noindent\keywordname\enspace\ignorespaces#1}

\begin{document}

\mainmatter  

\title{Approximating Scheduling Machines with Capacity Constraints}

\subtitle{\scriptsize{(This is a correction of paper at FAW2009)}}


\author{Chi Zhang
\thanks{\email{arxor.kid[AT]gmail.com}. Supported in part by the National High Technology Research and Development Program of China (2008AA01Z401), NSFC of China (90612001), SRFDP of China (20070055054), and Science and Technology Development Plan of Tianjin (08JCYBJC13000).}
\and Gang Wang \and Xiaoguang Liu \and Jing Liu}

\institute{Nankai-Baidu Joint Lab, College of Information Technical Science, Nankai University}

%
%

\maketitle

\begin{abstract}
In the Scheduling Machines with Capacity Constraints problem, we are given $k$ identical machines, each of which can process at most $m_i$ jobs. $M$ jobs are also given, where job $j$ has a non-negative processing time length $t_j \geq 0$. The task is to find a schedule such that the makespan is minimized and the capacity constraints are met. In this paper, we present a $3$-approximation algorithm using an extension of \emph{Iterative Rounding Method} introduced by Jain \cite{Jai:01}. To the best of the authors' knowledge, this is the first attempt to apply \emph{Iterative Rounding Method} to scheduling problem with capacity constraints.
\keywords{Approximation, Scheduling, Capacity Constraints, Iterative Rounding}
\end{abstract}

\section{Introduction}

We consider the Scheduling Machines with Capacity Constraints problem (SMCC): There are $k$ identical machines, and machine $i$ can process at most $m_i$ jobs. Given $M \leq \sum_{1 \leq i \leq k}{m_i}$ jobs with their processing time lengths, we are to find a schedule of jobs to machines that minimizes the makespan and meets the capacity constraints.

Scheduling problem is a classical $\mathbf{NP}$-Hard problem and has been studied extensively. In the general setting, we are given set $T$ of tasks, number $k$ of machines, length $l(t,i) \in \mathbf{Z}^+$ for each $t \in T$ and machine $i \in [1..k]$, the task is to find a schedule for $T$, namely, a function $f: T \rightarrow [1..k]$, to minimize $\max_{i \in [1..k]}{\sum_{t \in T, f(t) = i}{l(t,i)}}$\enspace. Lenstra, Shmoys and Tardos \cite{Len:Shm:Tar:90} gave a $2$-approximation algorithm for the general version and proved that for any $\epsilon > 0$ no $(\frac{3}{2}-\epsilon)$-approximation algorithm exists unless $\mathbf{P}=\mathbf{NP}$\enspace. Their method based on applying rounding techniques on fractional solution to linear programming relaxation. Gairing, Monien and Woclaw \cite{Gai:Mon:Woc:07} gave a faster combinatorial $2$-approximation algorithm for the general problem. They replaced the classical technique of solving the LP-relaxation and rounding afterwards by a completely integral approach. For the variation in which the number of processors $k$ is constant, Angel, Bampis and Kononov \cite{Ang:Bam:Kon:01} gave a \emph{fully polynomial-time approximation scheme (FPTAS)}. For the uniform variation where $l(t,i)$ is independent of the processor $i$, Hochbaum and Shmoys \cite{Hoc:Shm:87} gave a \emph{polynomial-time approximation scheme (PTAS)}.

The SMCC problem is one of the uniform variations, with capacity constraints on machines. One special case of SMCC problem in which there are only two identical machines was studied in \cite{Tsai:92} \cite{Yang:Ye:Zhang:03} \cite{Zhang:Ye:01}. Woeginger \cite{Woe:05} gave a FPTAS for the same problem. General SMCC problem is a natural generalization of scheduling problem without capacity constraints and can be used in some applications in real world, such as students distributions in university, the Crew Scheduling problem in Airlines Scheduling \cite{Ets:Mat:85} \cite{Rus:Hof:Pad:95}, etc. In the Crew Scheduling problem, crew rotations, sequences of flights legs to be flown by a single crew over a period of a few days, are given. Crews are paid by the amount of flying hours, which is determined by the scheduled rotations. Airline company wants to equalize the salaries of crews, i.e. to make the highest salary paid to crews minimum. Rotations starts and ends at the same crew base and must satisfy a large set of work rules based labor contracts covering crew personnel. In the concern of safety issues, one common contract requirement is the maximum times of flying of a single crew in a period of time. So the aim is to find a scheduling of rotations to crews that minimizes the highest salary and meets the maximum flying times constraints.

In many literature, researchers approached scheduling problem using rounding techniques. Lenstra, Shmoys and Tardos \cite{Len:Shm:Tar:90} applied rounding method to the decision problem to derive a \emph{$\rho$-relaxed decision procedure} and then used a binary search to obtain an approximation solution. In the SMCC problem, the capacity constraints defeat many previous methods. In this paper, our algorithm is one of the class of rounding algorithms, but use a different rounding method introduced by Jain \cite{Jai:01}. We do not round off the whole fractional solution in a single stage. Instead, we round it off iteratively.

\emph{Iterative Rounding Method}, introduced by Jain \cite{Jai:01}, was used in his breakthrough work on the Survivable Network Design problem. This rounding method does not need the half-integrality, but only requires that at each iteration there exist some variables with bounded values. In \cite{Jai:01}, Jain observed that at each iteration one can always find a edge $e$ has $x_e$ at least $1/2$, which ensures that the algorithm has an approximation ratio of $2$\enspace. As a successful extension of Jain's method, Mohit Singh and Lap Chi Lau \cite{Moh:Lap:07} considered the Minimum Bounded Degree Spanning Trees problem and gave an algorithm that produces a solution, which has at most the cost of optimal solution while violating vertices degrees constraints by $1$ at most. As far as the authors know, \emph{Iterative Rounding Method} has been used in graph problems, and has produced many beautiful results.

In this paper, we apply \emph{Iterative Rounding Method} to the scheduling problem with capacity constraints and obtain a $3$-approximation algorithm. To the best of the authors' knowledge, this is the first attempt to approach scheduling problem with capacity constraints using \emph{Iterative Rounding Method}.

The rest of the paper is organized as follows. In Section \ref{preliminary}, we formulate the SMCC problem as an Integer Program, give its natural relaxation and introduce our relaxation, Bounded Linear Programming Relaxation (BLPR). In Section \ref{techniques}, we present some properties of BLPR and prove theorems that support our algorithm. In Section \ref{approximation_algorithm}, we present bounding theorems and an approximation algorithm, $IRA$, and prove that it has an approximation ratio of $3$\enspace.

\section{Preliminary}\label{preliminary}

Formally, the SMCC problem is as follows: Given a positive integer $k$, $k$ positive integers $\{m_i|m_i > 0, 1 \leq i \leq k\}$, $M$ non-negative integers $\{t_j|t_j \geq 0, 1 \leq j \leq M \leq \sum_{i = 1}^{k}{m_i}\}$, we are to solve the following Integer Program (IP):
\begin{equation*}
  \begin{array}{rrlr}
    \mbox{minimize } & c &&\\
    \mbox{subject to } & \sum_{j = 1}^{M}{x_{ij}t_j} - c &\leq 0 & 1 \leq i \leq k\\
    & \sum_{j = 1}^{M}{x_{ij}} & \leq m_i & 1 \leq i \leq k\\
    & \sum_{i = 1}^{k}{x_{ij}} & = 1 & 1 \leq j \leq M\\
    & x_{ij} & \in \{0,1\} & 1 \leq i \leq k, 1 \leq j \leq M\\
  \end{array}
\end{equation*}

There are some relaxations, one of which is the following natural Linear Programming Relaxation (LPR) dropping the integrality constraints.
\begin{equation*}
  \begin{array}{rrlr}
    \mbox{minimize } & c &&\\
    \mbox{subject to } & \sum_{j = 1}^{M}{x_{ij}t_j} - c & \leq 0 & 1 \leq i \leq k\\
    & \sum_{j = 1}^{M}{x_{ij}} & \leq m_i & 1 \leq i \leq k\\
    & \sum_{i = 1}^{k}{x_{ij}} & = 1 & 1 \leq j \leq M\\
    & x_{ij} & \geq 0 & 1 \leq i \leq k, 1 \leq j \leq M\\
  \end{array}
\end{equation*}

We don't use LPR directly, but use an alternative relaxation, Bounded Linear Programming Relaxation (BLPR): Given a positive integer $k$, $k$ positive integers $\{m_i|m_i > 0, 1 \leq i \leq k\}$, $M$ non-negative integers $\{t_j|t_j \geq 0, 1 \leq j \leq M \leq \sum_{i = 1}^{k}{m_i}\}$, a real vector $b = (b_1, b_2, \dots, b_k)$ and $\mathcal{F} \subseteq \{(i,j)| 1 \leq i \leq k, 1 \leq j \leq M\}$, find a feasible solution under the following constraints
\begin{align}
  &\mbox{subject to } & \sum_{j = 1}^{M}{x_{ij}t_j} & \leq b_i & 1 \leq i \leq k& \label{first_constraints}\\
  && \sum_{j = 1}^{M}{x_{ij}} & \leq m_i & 1 \leq i \leq k& \label{second_constraints}\\
  && \sum_{i = 1}^{k}{x_{ij}} & = 1 & 1 \leq j \leq M& \label{third_constraints}\\
  && x_{ij} & = 1 & (i,j) \in \mathcal{F}&\\
  && x_{ij} & \geq 0 & 1 \leq i \leq k, 1 \leq j \leq M& \label{non_neg_constraints}
\end{align}
where vector $b = (b_1, b_2, \dots, b_k)$, called \emph{upper bounding vector}, is added to depict the different upper bounds of machines more precisely, and $\mathcal{F}$ is added to represent the partial solution in algorithm. Each $(i,j) \in \mathcal{F}$ indicates that job $j$ has been scheduled to machine $i$\enspace. Those $\{x_{ij}|(i,j) \in \mathcal{F}\}$ are considered as constants.

We will show that properly constructing vector $b = (b_1, b_2, \dots, b_k)$ makes the solution produced by our algorithm under control and easy to analyze.

\begin{definition}
  In a BLPR problem $\Lambda$, \emph{upper bounding vector} $b = (b_1, b_2, \dots, b_k)$ is called \emph{feasible} if $\Lambda$ is feasible.
\end{definition}

Keeping the \emph{upper bounding vector} $b$ \emph{feasible} all the time is the key of our algorithm, which guarantees that we can always find a feasible solution bounded by $b$\enspace.

\section{Techniques}\label{techniques}

Before we present our algorithm, we need to introduce some properties of BLPR.

With respect to the partial solution $\mathcal{F}$, let $c_i$ denote the number of already scheduled jobs in machine $i$, namely, $c_i = |\{(i,j)|(i,j) \in \mathcal{F}\}|$\enspace. Note that $m_i - c_i$ indicates the \emph{free} capacity in machine $i$\enspace.

We call a job \emph{free} if it has not been scheduled to any machine and call a machine \emph{free} if it still has free capacity. For a feasible fractional solution $x$ to $\Lambda$, define a bipartite graph $G(x) = G(L,R,E)$, called \emph{supporting graph}, where $L$ represents the set of \emph{free} machines, $R$ represents the set of \emph{free} jobs and $E = \{(i,j)|x_{ij} > 0, (i,j) \notin \mathcal{F}\}$\enspace. We denote the number of \emph{free} jobs and the number of \emph{free} machines by $M^*$ and $k^*$ respectively. Note that for \emph{free} job $j$, $\sum_{(i,j) \in E}{x_{ij}} = 1$\enspace.

Consider the \emph{Constraint Matrix} of $\Lambda$, which consists of the coefficients of the left side of equalities and inequalities, except for the non-negativity constraints from \eqref{non_neg_constraints}:
\begin{equation}
  \begin{pmatrix}
    \begin{pmatrix}t_1&t_2&\dots&t_M\end{pmatrix} &&&\\
    &\begin{pmatrix}t_1&t_2&\dots&t_M\end{pmatrix} &&\\
    &&\ddots&\\
    &&&\begin{pmatrix}t_1&t_2&\dots&t_M\end{pmatrix}\\
    \begin{pmatrix}1&1&\dots&1\end{pmatrix} &&&\\
    &\begin{pmatrix}1&1&\dots&1\end{pmatrix} &&\\
    &&\ddots&\\
    &&&\begin{pmatrix}1&1&\dots&1\end{pmatrix}\\
    \begin{pmatrix}
      1&&&\\
      &1&&\\
      &&\ddots&\\
      &&&1
    \end{pmatrix} &
    \begin{pmatrix}
      1&&&\\
      &1&&\\
      &&\ddots&\\
      &&&1
    \end{pmatrix} &
    \dots &
    \begin{pmatrix}
      1&&&\\
      &1&&\\
      &&\ddots&\\
      &&&1
    \end{pmatrix}
  \end{pmatrix}
\end{equation}
where the $1^{\mbox{st}}$ to $k^{\mbox{th}}$ rows represent the constraints from \eqref{first_constraints}, the $(k+1)^{\mbox{th}}$ to $(2k)^{\mbox{th}}$ rows represent the constraints from \eqref{second_constraints}, and the $(2k+1)^{\mbox{th}}$ to $(2k+M)^{\mbox{th}}$ rows represent the constraints from \eqref{third_constraints}.

One can verify that the $(k+1)^{\mbox{th}}$ row can be linearly expressed by the rest of rows. Thus the rank of \emph{Constraints Matrix} is bounded by the following lemma
\begin{lemma}\label{rank_bounding}
  \emph{Constraints Matrix} has a rank at most $M + 2k - 1$\enspace.
  \qed
\end{lemma}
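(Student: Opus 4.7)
The plan is to exhibit an explicit linear dependence among the $2k+M$ rows of the Constraints Matrix, which immediately caps its rank at $2k+M-1$. The natural candidate dependence involves the $k$ capacity rows (rows $k+1,\dots,2k$) and the $M$ assignment rows (rows $2k+1,\dots,2k+M$), so I would focus there.

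First I would describe each row as a $kM$-dimensional vector indexed by pairs $(i,j)$ with $1 \leq i \leq k$, $1 \leq j \leq M$. In the capacity row for machine $i$, the entry at position $(i',j)$ is $1$ if $i'=i$ and $0$ otherwise; summing these over $i=1,\dots,k$ gives the all-ones vector, because every coordinate $(i,j)$ is hit exactly once (by the capacity row for its own machine). Similarly, in the assignment row for job $j$, the entry at position $(i,j')$ is $1$ if $j'=j$ and $0$ otherwise; summing over $j=1,\dots,M$ also yields the all-ones vector. Equating these two sums gives the identity
\begin{equation*}
\sum_{i=1}^{k} R^{\text{cap}}_{i} \;=\; \sum_{j=1}^{M} R^{\text{asg}}_{j},
\end{equation*}
where $R^{\text{cap}}_{i}$ and $R^{\text{asg}}_{j}$ denote the capacity and assignment rows respectively. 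Rearranging isolates $R^{\text{cap}}_{1}$ (which is the $(k+1)$-th row of the whole matrix) as a linear combination of the remaining $k-1$ capacity rows and the $M$ assignment rows, confirming the author's parenthetical remark.

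Since the $(k+1)$-th row lies in the span of the other $2k+M-1$ rows, the row rank of the Constraints Matrix is at most $2k+M-1 = M+2k-1$, which is exactly the claim. I do not anticipate any real obstacle here: the argument is purely a counting observation about which variables appear in which constraints, and no structural fact about the processing times $t_j$, the capacities $m_i$, or the fixed set $\mathcal{F}$ is needed. The only mild subtlety worth flagging is that the processing-time rows (rows $1$ through $k$) play no role in this dependence and are left untouched, so the bound $M+2k-1$ is the right one and cannot be sharpened by the same argument.
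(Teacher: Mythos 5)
Your proposal is correct and is exactly the argument the paper intends: the paper simply asserts that the $(k+1)$-th row is linearly expressible by the remaining rows, and your identity $\sum_{i=1}^{k} R^{\text{cap}}_{i} = \sum_{j=1}^{M} R^{\text{asg}}_{j}$ (both sides being the all-ones vector on the $kM$ coordinates) is the verification of that assertion. No difference in approach; your write-up just makes the omitted computation explicit.
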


Recall that, a basic solution $x$ to $\Lambda$ is the unique solution determined by a set of linearly independent \emph{tight constraints} that are satisfied as equalities. We remove all zero variables in $x$ so that no \emph{tight constraints} comes from \eqref{non_neg_constraints}. Thus the number of non-zero variables in $x$ never exceeds the rank of \emph{Constraints Matrix}. When $\mathcal{F} = \emptyset$, the following inequality holds
\begin{equation}
  |E| \leq M + 2k - 1
\end{equation}

We can remove those non-\emph{free} machines from $\Lambda$, move fixed variables $\{x_{ij}|(i,j) \in \mathcal{F}\}$ to the right side of the equalities and inequalities as constants and remove variables fixed to $0$. By doing this, we obtain a new sub-problem and only focus on \emph{free} jobs and \emph{free} machines. In the new sub-problem, Lemma \ref{rank_bounding} holds. So in general, the following corollary holds.
\begin{corollary}\label{edges_bounding}
  Given a BLPR problem, $\Lambda$, its basic solution $x$ and \emph{supporting graph} $G(x) = G(L,R,E)$, we have
  \begin{equation*}
    |E| \leq M^* + 2k^* - 1
  \end{equation*}
  \qed
\end{corollary}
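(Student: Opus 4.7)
The plan is to reduce the general case to the already-handled situation $\mathcal{F}=\emptyset$, and then invoke Lemma \ref{rank_bounding} on a suitably restricted sub-problem. The key observation is that edges in the \emph{supporting graph} $G(x)$ correspond precisely to non-zero \emph{free} variables of $x$ (those with $x_{ij}>0$ and $(i,j)\notin\mathcal{F}$), so bounding $|E|$ amounts to bounding the number of such variables by a rank argument.

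First, I would construct a reduced BLPR problem $\Lambda'$ on the \emph{free} jobs and \emph{free} machines only. For every $(i,j)\in\mathcal{F}$, the variable $x_{ij}=1$ is treated as a constant and moved to the right-hand side of the inequalities in \eqref{first_constraints} and \eqref{second_constraints}: concretely, in row $i$ of \eqref{first_constraints} replace $b_i$ by $b_i-\sum_{j:(i,j)\in\mathcal{F}}t_j$, and in row $i$ of \eqref{second_constraints} replace $m_i$ by $m_i-c_i$. Every job $j$ that is already scheduled (i.e.\ appears in some $(i,j)\in\mathcal{F}$) has its equality \eqref{third_constraints} trivially satisfied and can be removed, and every machine whose free capacity is exhausted drops out as well. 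Likewise, variables $x_{ij}$ that are zero in $x$ can be removed without losing any edge of $E$. What remains is a BLPR instance on $M^*$ jobs and $k^*$ machines whose \emph{Constraints Matrix} has exactly the same block structure as in the original derivation.

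Next, I would note that the restriction $x'$ of $x$ to the surviving variables is a basic solution of $\Lambda'$: any set of linearly independent tight constraints determining $x$ in $\Lambda$ restricts, after discarding the constraints used up by $\mathcal{F}$, to a set of linearly independent tight constraints for $x'$ in $\Lambda'$. Applying the rank-counting argument of Lemma \ref{rank_bounding} verbatim to $\Lambda'$ yields that the number of non-zero variables of $x'$ is at most $M^*+2k^*-1$. Since $|E|$ equals exactly this number of non-zero free variables, the desired inequality follows.

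The main obstacle I anticipate is the bookkeeping around the reduction: one must make sure that eliminating the fixed variables does not destroy the linear independence of the tight constraints (so the rank bound for $\Lambda'$ still controls $x'$) and that no edge of $E$ is accidentally deleted in the process. Both issues are handled by the observation that variables in $\mathcal{F}$ and variables fixed at $0$ play no role in $E$, and by the fact that constraints from $\mathcal{F}$ contribute only to the right-hand sides after substitution, not to the coefficient matrix whose rank we are bounding.
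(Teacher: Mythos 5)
Your proposal is correct and follows essentially the same route as the paper: the paper's own justification (given in the paragraph preceding the corollary) likewise removes non-\emph{free} machines, moves the fixed variables $\{x_{ij}\mid (i,j)\in\mathcal{F}\}$ to the right-hand side, discards zero variables, and applies Lemma \ref{rank_bounding} to the resulting sub-problem on $M^*$ jobs and $k^*$ machines. Your only addition is to spell out explicitly that the restriction of $x$ remains a basic solution of the reduced problem, a point the paper leaves implicit.
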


We introduce lemmas on the basic solution to $\Lambda$ when there are no less \emph{free} jobs than twice the \emph{free} machines, namely, $M^* \geq 2k^*$.

\begin{lemma}
  If $\Lambda$ is feasible with $M^* \geq 2k^*$ and $x$ is a basic solution, there exist $M^* - 2k^* + 1$ variables with values of $1$\enspace.
\end{lemma}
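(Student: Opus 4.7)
The plan is to use Corollary \ref{edges_bounding} as a bound on $|E|$ and then do a degree-counting argument on the supporting graph $G(x) = G(L,R,E)$, exploiting the key observation that any \emph{free} job of degree $1$ in $G(x)$ must be assigned fractional value exactly $1$ on its unique incident edge.

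First, I would recall that for every \emph{free} job $j \in R$, we have $\sum_{(i,j) \in E} x_{ij} = 1$ (the equality constraint from \eqref{third_constraints}, with fixed variables already moved to the right-hand side). If the degree $d_j$ of job $j$ in $G(x)$ equals $1$, then the single non-zero variable $x_{ij}$ incident to $j$ must satisfy $x_{ij} = 1$. So it suffices to show that there are at least $M^* - 2k^* + 1$ free jobs of degree $1$.

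Next, I would set up the counting. Let $n_1$ denote the number of free jobs of degree exactly $1$ in $G(x)$, and note that every other free job has degree at least $2$ (degree $0$ is impossible because the constraint $\sum_{i} x_{ij} = 1$ forces at least one positive variable). Summing degrees on the $R$-side gives
\begin{equation*}
  |E| \;=\; \sum_{j \in R} d_j \;\geq\; n_1 + 2(M^* - n_1) \;=\; 2M^* - n_1.
\end{equation*}
Combining this with Corollary \ref{edges_bounding}, namely $|E| \leq M^* + 2k^* - 1$, yields $2M^* - n_1 \leq M^* + 2k^* - 1$, i.e.\ $n_1 \geq M^* - 2k^* + 1$. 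Each such degree-$1$ free job contributes a distinct variable $x_{ij}$ equal to $1$, which gives the claimed count.

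There is not really a hard step here: the whole argument is a one-line edge-count once one invokes Corollary \ref{edges_bounding}. The only thing to be careful about is that after restricting to the sub-problem on \emph{free} jobs and \emph{free} machines (as done just before the corollary), the equality $\sum_{(i,j) \in E} x_{ij} = 1$ still holds for every $j \in R$, so that degree-$1$ jobs genuinely force a value of $1$ rather than some partial residual value. Once this is noted, the inequality $n_1 \geq M^* - 2k^* + 1$ is immediate, and the lemma follows.
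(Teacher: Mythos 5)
Your proof is correct and follows essentially the same degree-counting argument as the paper: bound $|E|$ below by $2M^* - n_1$ using the fact that each free job has degree at least one, and combine with Corollary \ref{edges_bounding}. You are in fact slightly more explicit than the paper in noting that a degree-$1$ free job forces its unique incident variable to equal $1$ via the constraint $\sum_i x_{ij} = 1$, which the paper leaves implicit.
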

\begin{proof}
  For a basic solution $x$, we construct \emph{supporting graph} $G(x) = G(L,R,E)$\enspace. Suppose that $l$ of $M^*$ \emph{free} jobs are of degree of at most one in $G$. Note that each of them has degree at least one. Each of the rest $M^* - l$ \emph{free} jobs has degree of more than one. The following inequality holds.
  \begin{equation}
    |E| \geq 2(M^* - l) + l = 2M^* - l
  \end{equation}

  By Corollary \ref{edges_bounding}, we have
  \begin{equation}
    l \geq M^* - 2k^* + 1
  \end{equation}
  \qed
\end{proof}

The following corollary holds
\begin{corollary}\label{iterative_theorem}
  If $\Lambda$ is feasible with $M^* \geq 2k^*$ and $x$ is a basic solution, there exist a \emph{free} machine $p$ and a \emph{free} job $q$ such that $x_{pq} = 1$\enspace.
  \qed
\end{corollary}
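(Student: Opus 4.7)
The plan is to derive the corollary as an immediate specialization of the preceding lemma. That lemma guarantees, for any basic solution $x$ to a feasible BLPR with $M^* \geq 2k^*$, the existence of at least $M^* - 2k^* + 1$ variables whose value equals $1$. The hypothesis $M^* \geq 2k^*$ makes this quantity at least $1$, so at least one such variable $x_{pq} = 1$ is guaranteed to exist. All that remains is to check that the indices $(p,q)$ correspond to a \emph{free} machine and a \emph{free} job.

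For this I would look back at the construction inside the lemma's proof. The $M^* - 2k^* + 1$ variables produced there are precisely of the form $x_{pq}$ where $q$ ranges over the free jobs of degree exactly $1$ in the supporting graph $G(x) = G(L,R,E)$ and $p \in L$ is the unique neighbor of $q$. Since every free job satisfies $\sum_{(i,q) \in E} x_{iq} = 1$, a degree-$1$ free job $q$ forces its single nonzero entry $x_{pq}$ to equal $1$. By the definition of $G(x)$, this $p$ is a free machine, giving exactly the pair required by the statement.

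I do not anticipate any real obstacle: the corollary is a one-line counting consequence of the lemma. The only point that needs care is to make explicit the identification between the generic \emph{variables of value $1$} asserted by the lemma and concrete (free-machine, free-job) pairs in the supporting graph, so that the reader sees why $p$ is automatically in $L$ and $q$ automatically in $R$ rather than being indices that could range over already-scheduled machines or jobs.
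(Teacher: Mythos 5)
Your proposal is correct and matches the paper's (implicit) argument: the corollary is stated with an immediate \qed precisely because it follows from the preceding lemma by observing that $M^* - 2k^* + 1 \geq 1$ when $M^* \geq 2k^*$. Your extra care in tracing the value-$1$ variables back to degree-one free jobs in the supporting graph, so that $p \in L$ and $q \in R$ are guaranteed to be free, is a worthwhile clarification but not a departure from the paper's route.
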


\section{A $3$-approximation Algorithm}\label{approximation_algorithm}

In this section, we present an approximation algorithm $IRA$\enspace. Let $\mathcal{A}$ denote the makespan in the solution produced by $IRA$, $\mathcal{OPT}$ denote the makespan in the optimal solution to $\Delta$\enspace.

We introduce three bounding theorems on BLPR. Noting that when $M^* \geq 2k^*$ we can find a $x_{pq} = 1$, we can schedule $q$ to $p$ without increasing the lengths in the fractional solution. We first show the theorem for the case $M^* \geq 2k^*$.
\begin{theorem}\label{main_theorem1}
  Given a BLPR problem, $\Lambda$, with $M^* \geq 2k^*$ and its basic solution $x$\enspace. Based on $\Lambda$, we construct a new BLPR problem $f(\Lambda)$ as follows:
  \begin{enumerate}
  \item Find a variable $x_{pq} = 1$\enspace;
  \item $\mathcal{F}' \gets \mathcal{F} + (p,q)$\enspace;
  \item The rest parts of $f(\Lambda)$ are the same as $\Lambda$\enspace.
  \end{enumerate}

  If $b$ is a \emph{feasible upper bounding vector} of $\Lambda$ then $b'$ is a \emph{feasible upper bounding vector} of $f(\Lambda)$\enspace.
  \qed
\end{theorem}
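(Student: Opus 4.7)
The plan is to exhibit the given basic solution $x$ itself as a certificate of feasibility for $f(\Lambda)$. Since step 3 of the construction asserts that the remaining data of $f(\Lambda)$ coincides with that of $\Lambda$, I read $b'$ as equal to $b$, with the same $k$, $m_i$, $t_j$, and $M$; the only change is $\mathcal{F}' = \mathcal{F} \cup \{(p,q)\}$. Consequently, the constraint system of $f(\Lambda)$ is exactly the constraint system of $\Lambda$ augmented by the single equality $x_{pq}=1$.

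First, I would invoke Corollary \ref{iterative_theorem} to guarantee that a pair $(p,q)$ with $p$ free, $q$ free, and $x_{pq}=1$ actually exists, which is what makes step 1 of the construction well-defined under the hypothesis $M^*\geq 2k^*$. Next, I would check constraint by constraint that $x$ remains feasible for $f(\Lambda)$: the load inequalities \eqref{first_constraints}, the capacity inequalities \eqref{second_constraints}, the assignment equalities \eqref{third_constraints}, the fixing conditions $x_{ij}=1$ for the original $(i,j)\in\mathcal{F}$, and the non-negativity constraints \eqref{non_neg_constraints} all hold because $x$ was already feasible for $\Lambda$. The new fixing condition $x_{pq}=1$ for the added pair holds by the very choice of $(p,q)$ in step 1.

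Hence $x$ witnesses that $f(\Lambda)$ with upper bounding vector $b'=b$ is feasible, so $b'$ is a feasible upper bounding vector of $f(\Lambda)$. There is no real obstacle here: the iterative step is designed to be ``free'' in the sense that fixing an already-integral coordinate of a feasible solution does not disturb the remaining constraints, and the nontrivial content, namely the existence of such an integral coordinate, is precisely what Corollary \ref{iterative_theorem} supplies. The only point worth being explicit about is that $b$ need not be modified at this stage, because the full load $t_q$ of job $q$ has already been accounted for on machine $p$'s side of inequality \eqref{first_constraints} by the fact that $x_{pq}=1$ in $x$.
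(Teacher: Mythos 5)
Your proposal is correct and matches the paper's intent exactly: the paper gives no written proof (the theorem is stated with an immediate \qed), and the surrounding discussion makes clear the argument is precisely that Corollary \ref{iterative_theorem} supplies the pair $(p,q)$ with $x_{pq}=1$, after which $x$ itself certifies feasibility of $f(\Lambda)$ with $b'=b$ unchanged. Your explicit constraint-by-constraint check simply spells out what the paper treats as evident.
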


When $M^* < 2k^*$ and some \emph{free} machines have \emph{free} capacity of one, we have the following theorem.
\begin{theorem}\label{main_theorem2}
  Given a BLPR problem, $\Lambda$, with $M^* < 2k^*$ and its basic solution $x$. Moreover some \emph{free} machines have \emph{free} capacity of $1$. Based on $\Lambda$, we construct a new BLPR problem $g(\Lambda)$ as follows:
  \begin{enumerate}
  \item Let $p$ denote a machine with $1$ \emph{free} capacity;
  \item Let $q$ denote the \emph{free} job with the largest length;
  \item $\mathcal{F}' \gets \mathcal{F} + (p,q)$\enspace;
  \item $b' \gets (b_1, b_2, \dots, b_{p-1}, b_p + t_q, b_{p+1}, \dots, b_k)$\enspace;
  \item The rest parts of $g(\Lambda)$ are the same as $\Lambda$\enspace.
  \end{enumerate}

  If $b$ is a \emph{feasible upper bounding vector} of $\Lambda$ then $b'$ is a \emph{feasible upper bounding vector} of $g(\Lambda)$\enspace.
\end{theorem}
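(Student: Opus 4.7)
The plan is to produce a feasible solution $x'$ to $g(\Lambda)$ by modifying any feasible solution $x$ of $\Lambda$. First I would observe that the modifications are essentially forced on two ``rows'' and one ``column'' of $x'$: the assignment $x'_{pq}=1$, together with the constraint $\sum_j x'_{pj}\le m_p$ and the fact that $p$ has free capacity exactly $1$, forces $x'_{pj}=0$ for every other free job $j$; dually, $\sum_i x'_{iq}=1$ forces $x'_{iq}=0$ for every $i\neq p$. The real work of the proof is then to redistribute the ``displaced'' fractional mass among the remaining entries while preserving the length and count bounds for machines $i\neq p$.

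Next I would set up the bookkeeping. Let $X=\sum_{j\neq q,\ j\text{ free}} x_{pj}$ be the supply of fractional mass that must leave $p$, and $Y=\sum_{i\neq p} x_{iq}=1-x_{pq}$ the demand of fractional $q$-mass that must leave the other machines. Feasibility of $x$ gives $x_{pq}+X\le m_p-c_p=1$, so $X\le Y$, which is exactly the condition that lets one solve a fractional transportation problem: there exist coefficients $\Delta_{ij}\ge 0$ (for $i\neq p$ and free $j\neq q$) satisfying $\sum_i \Delta_{ij}=x_{pj}$ and $\sum_j \Delta_{ij}\le x_{iq}$. Define $x'_{ij}:=x_{ij}+\Delta_{ij}$ on this remaining block, along with the forced values above.

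The verification then splits into four parts. Every free job $j\neq q$ still has total assignment $1$, because the mass $x_{pj}$ is simply rerouted intact into the $i\neq p$ rows. The count bound on $p$ becomes $c_p+1=m_p$. The count on each machine $i\neq p$ changes by $\sum_j \Delta_{ij}-x_{iq}\le 0$, so it only decreases. The length bound on $p$ reduces to (fixed-job length on $p$) $+\, t_q$, which is at most $b_p+t_q=b'_p$ because the fixed-job length was already $\le b_p$ in $\Lambda$.

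The only substantive step — and, as I expect, the main obstacle — is the length bound on a machine $i\neq p$, whose net change is
\[
\sum_{j\neq q} \Delta_{ij}\,t_j \;-\; x_{iq}\,t_q.
\]
Here I would invoke the selection rule for $q$ as the free job of maximum length: $t_j\le t_q$ for every free $j$, so $\sum_j \Delta_{ij} t_j \le t_q\sum_j \Delta_{ij}\le t_q\, x_{iq}$, which makes the change non-positive and preserves the bound $b_i=b'_i$. This one inequality is the crux of the argument and is the only place where the specific choice of $q$ in the theorem is used.
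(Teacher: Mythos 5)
Your proposal is correct and follows essentially the same route as the paper's proof: the paper's $G$-transition algorithm is exactly an iterative construction of your redistribution coefficients $\Delta_{ij}$ (swapping $q$-mass into $p$ against other-job mass out of $p$), and both arguments hinge on the same two facts — that $p$'s free capacity of $1$ bounds the displaced mass $X$ by the vacated mass $Y=1-x_{pq}$, and that $t_j\le t_q$ for every free job $j$ keeps the lengths of machines $i\neq p$ from increasing. Your one-shot transportation formulation is if anything a cleaner and more rigorous rendering of what the paper's while-loop does informally.
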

\begin{proof}
  To schedule $q$ to $p$, for each \emph{free} machine $s \neq p$ with $x_{sq} > 0$, we move $x_{sq}$ fraction of job $q$ to machine $p$ then move back from $p$ as much as possible but no more than $x_{sq}$ fraction of jobs other than $q$ as in Algorithm \ref{g}.

  \begin{algorithm}
    \caption{$G-transition$}\label{g}
    \begin{algorithmic}[1]
      \REQUIRE A BLPR $\Lambda$ with $M^* < 2k^*$ and machine $p$ has \emph{free} capacity of $1$
      \ENSURE A BLPR $g(\Lambda)$
      \STATE Let $x$ be a basic feasible solution to $\Lambda$\enspace and $q$ be the longest \emph{free} job;
      \WHILE {there exists $p' \neq p$ such that $x_{p'{q}} > 0$}
        \IF {there exists $q' \neq q$ such that $x_{p{q'}} > 0$}
          \STATE $\alpha \gets \min\{x_{p'{q}}, x_{p{q'}}\}$\enspace;
          \STATE $x_{p'{q}} \gets x_{p'{q}} - \alpha$\enspace;
          \STATE $x_{p{q'}} \gets x_{p{q'}} - \alpha$\enspace;
        \ELSE [$\forall q' \neq q, x_{p{q'}} = 0$]
          \STATE $x_{p{q'}} \gets x_{p{q'}} + x_{p'{q}}$\enspace;
        \ENDIF
      \ENDWHILE
    \end{algorithmic}
  \end{algorithm}

  Because $q$ has the largest length among the \emph{free} jobs, we can guarantee that the length of each \emph{free} machine $p' \neq p$ will not increase and the length of machine $p$ will increase by at most $t_q$\enspace. Note that in $g(\Lambda)$\enspace, $x_{pq} = 1$ and $p$ is no longer \emph{free}. This implies there is a feasible solution to $g(\Lambda)$\enspace.
  \qed
\end{proof}

When $M^* < 2k^*$ and every \emph{free} machine has more than $1$ \emph{free} capacity, we can schedule these $M^*$ jobs arbitrarily but only assuring that each machine gets no more than $2$ jobs. One can prove the following theorem.

\begin{theorem}\label{main_theorem3}
  Given a feasible BLPR problem, $\Lambda$, with $M^* < 2k^*$. Moreover every \emph{free} machine has \emph{free} capacity of at least $2$. We construct a new BLPR problem $h(\Lambda)$ as follows:
  \begin{enumerate}
  \item Schedule \emph{free} jobs arbitrarily but only assuring that each machine gets no more than $2$ jobs;
  \item For each machine $p$, increase $b_p$ by the sum of job(s) scheduled to $p$ and update $\mathcal{F}$ accordingly;
  \item The rest parts of $h(\Lambda)$ are the same as $\Lambda$\enspace.
  \end{enumerate}
  By doing this, we obtain $h(\Lambda)$ in which all jobs have been scheduled and $h(\Lambda)$ is feasible\enspace.
  \qed
\end{theorem}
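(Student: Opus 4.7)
The plan is to exhibit an explicit integral feasible solution for $h(\Lambda)$, which by definition shows that $b'$ is a feasible upper bounding vector. Since the theorem asserts that the constructed $h(\Lambda)$ has every job scheduled, the supporting graph for $h(\Lambda)$ becomes empty and I need only verify that the final assignment $\mathcal{F}'$ (extending $\mathcal{F}$) satisfies constraints \eqref{first_constraints}, \eqref{second_constraints}, and \eqref{third_constraints} with $x_{ij}=1$ for $(i,j)\in\mathcal{F}'$ and $x_{ij}=0$ otherwise.

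The first thing I would check is that the arbitrary assignment in Step 1 actually exists. There are $k^*$ free machines, each of which is permitted to receive up to $2$ new jobs, yielding $2k^*$ available ``slots''; since $M^* < 2k^*$ by hypothesis, the $M^*$ free jobs can be packed into these slots (for example by a greedy pass that refuses to place a third job on any single machine). Second, constraint \eqref{second_constraints} is preserved because each free machine had free capacity at least $2$, so after adding at most $2$ new jobs the total number of jobs on machine $p$ is at most $c_p + 2 \leq m_p$. Constraint \eqref{third_constraints} holds by construction, as each free job is placed on exactly one machine. Finally, constraint \eqref{first_constraints} is forced to hold with equality on each modified machine by the very definition $b'_p = b_p + \sum_{j\text{ newly placed on }p} t_j$, while machines that receive no new job keep the same $b_p$ and continue to satisfy the old length bound since $\mathcal{F}\subseteq\mathcal{F}'$.

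There is no real obstacle here; the only subtlety is checking that Step 1 is always executable, i.e.\ that $M^* < 2k^*$ genuinely suffices for a valid packing, which follows from a trivial pigeonhole argument. The theorem is essentially a bookkeeping statement: by enlarging $b_p$ by exactly the length of jobs newly placed on $p$, we buy feasibility by weakening the length bounds, and by appealing to $M^* < 2k^*$ together with $m_p - c_p \geq 2$ we pay no extra cost in capacity. The approximation guarantee of $IRA$, proved separately, will then control how much the $b_p$'s were inflated across all invocations of $g(\cdot)$ and $h(\cdot)$.
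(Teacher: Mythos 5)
Your proposal is correct and matches the paper's (essentially omitted) argument: the paper states this theorem with only the one-line remark that the $M^*<2k^*$ pigeonhole packing and the capacity slack of $2$ make the arbitrary assignment work, and your verification of constraints \eqref{first_constraints}--\eqref{third_constraints} is exactly the bookkeeping it leaves to the reader. One small wording slip: constraint \eqref{first_constraints} need not hold \emph{with equality} on a modified machine $p$, since the pre-existing integral load $\sum_{(p,j)\in\mathcal{F}}t_j$ may be strictly below $b_p$; what you actually use (and what suffices) is that feasibility of $\Lambda$ gives $\sum_{(p,j)\in\mathcal{F}}t_j\leq b_p$, so adding exactly the new jobs' lengths to $b_p$ preserves the inequality.
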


By Theorem \ref{main_theorem1}, as long as $M^* \geq 2k^*$, we can always schedule a \emph{free} job to a \emph{free} machine but without increasing its length. If $M^* < 2k^*$, Theorem \ref{main_theorem2} and \ref{main_theorem3} guarantee we still can make our decision in a fairly simple way. We present our algorithm using \emph{Iterative Rounding Method}, $IRA$, in Algorithm \ref{ira}.

\begin{algorithm}
\caption{$IRA$}\label{ira}
\begin{algorithmic}[1]
\REQUIRE An IP $\Delta$
\ENSURE A feasible integral solution $\mathcal{F}$
\STATE Construct natural linear programming relaxation $\Gamma$\enspace;
\STATE Solve $\Gamma$ optimally and let $y = (y_1, y_2, \dots, y_k)$ be the lengths of machines in the optimal solution;
\STATE Construct a BLPR $\Lambda$, letting $y$ be the \emph{upper bounding vector} and $\mathcal{F} = \emptyset$\enspace; \label{initialization}
\WHILE {$M^* > 0$}
  \IF {$M^* \geq 2k^*$}
    \STATE $\Lambda \gets f(\Lambda)$\enspace; \label{case1}
  \ELSE [$M^* < 2k^*$]
    \IF {there exists a machine $p$ with \emph{free} capacity of $1$}
      \STATE $\Lambda \gets g(\Lambda)$\enspace; \label{case2}
    \ELSE [every \emph{free} machine have more than $1$ capacity]
      \STATE $\Lambda \gets h(\Lambda)$\enspace; \label{case3}
    \ENDIF
  \ENDIF
\ENDWHILE
\RETURN $\mathcal{F}$\enspace;
\end{algorithmic}
\end{algorithm}

Finding a basic solution to a linear program can be done in polynomial time by using the ellipsoid algorithm \cite{Kha:79} then converting the solution found into a basic one \cite{Jai:01}. Together with the following observation
\begin{lemma}
  At Line \ref{initialization}, $y$ is a \emph{feasible upper bounding vector} of $\Lambda$\enspace.
  \qed
\end{lemma}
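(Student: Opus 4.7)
The plan is to exhibit an explicit feasible solution to $\Lambda$ by reusing the optimal fractional solution of the natural LP relaxation $\Gamma$. Let $x^{*}$ denote an optimal solution to $\Gamma$ with objective value $c^{*}$, and recall that by the definition in Line~2 of $IRA$ the coordinates of $y$ are precisely $y_i = \sum_{j=1}^{M} x^{*}_{ij}\, t_j$, i.e.\ the loads of the machines under $x^{*}$.

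I would then simply verify the five constraints~\eqref{first_constraints}--\eqref{non_neg_constraints} of BLPR for $x = x^{*}$ with $b = y$ and $\mathcal{F} = \emptyset$. Constraint~\eqref{first_constraints} holds with equality by the very definition of $y_i$; constraints~\eqref{second_constraints}, \eqref{third_constraints} and \eqref{non_neg_constraints} are inherited verbatim from the feasibility of $x^{*}$ in $\Gamma$; and the constraint indexed by $\mathcal{F}$ is vacuous since $\mathcal{F}=\emptyset$ at this point. Hence $x^{*}$ is a feasible solution to $\Lambda$, which by definition means that $y$ is a feasible upper bounding vector.

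There is essentially no obstacle here: the lemma is little more than a bookkeeping observation that the load vector of any feasible LP solution is a valid upper bounding vector for the corresponding BLPR with empty partial schedule. The only thing worth emphasizing in the write-up is that the first BLPR constraint is satisfied with equality (not merely inequality) under $x^{*}$, since this is the precise reason we can afford to set $b_i := y_i$ rather than something larger; this tightness is what later allows Theorems~\ref{main_theorem1}--\ref{main_theorem3} to bound the growth of $b$ throughout the iterations of $IRA$.
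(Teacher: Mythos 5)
Your proof is correct and is exactly the routine verification the paper has in mind: the paper states this lemma with no proof at all (just \qed), treating it as the immediate observation that the optimal LP solution $x^{*}$ itself witnesses feasibility of the BLPR with $b=y$ and $\mathcal{F}=\emptyset$. Nothing is missing.
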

the correctness of $IRA$ follows from Theorem \ref{main_theorem1}, \ref{main_theorem2} and \ref{main_theorem3}.
\begin{corollary}
  Algorithm $IRA$ always terminates in polynomial time.
  \qed
\end{corollary}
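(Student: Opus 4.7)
The plan is to prove the corollary by establishing two facts: (i) the main \textbf{while} loop terminates after a polynomial number of iterations, and (ii) each iteration can be carried out in polynomial time.

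For (i), I would observe that every branch inside the loop strictly decreases the number of free jobs $M^*$. In the $f$-branch, Corollary \ref{iterative_theorem} guarantees a pair $(p,q)$ with $x_{pq}=1$; adding it to $\mathcal{F}$ marks one more job as scheduled, so $M^*$ drops by at least $1$. The $g$-branch likewise inserts exactly one new pair into $\mathcal{F}$, again decrementing $M^*$. The $h$-branch schedules \emph{all} remaining free jobs at once, so after it runs we have $M^*=0$ and the loop exits. Hence the loop runs at most $M$ times.

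For (ii), I would walk through each line of Algorithm \ref{ira}. Constructing $\Gamma$ and $\Lambda$ is elementary. Solving $\Gamma$ and, inside the loop, re-solving $\Lambda$ to obtain a basic feasible solution can be done in polynomial time via the ellipsoid algorithm of Khachiyan \cite{Kha:79} together with the standard conversion of a feasible solution to a basic one as in \cite{Jai:01}, exactly as noted in the paragraph preceding the corollary. Determining the active case only requires counting free jobs and free machines and inspecting the residual capacities $m_i-c_i$, which is linear in the input. Executing the transitions is also cheap: $f(\Lambda)$ needs a single scan of the support to locate a variable equal to $1$; $g(\Lambda)$ runs Algorithm \ref{g}, whose \textbf{while} loop performs at most $k$ iterations since each iteration either zeros out some $x_{p'q}$ or terminates the loop; and $h(\Lambda)$ performs at most $M^*\le M$ arbitrary assignments. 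Thus every iteration is polynomial.

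Combining (i) and (ii) gives an overall polynomial running time, and the correctness argument that precedes the corollary already ensures that each branch is applicable whenever it is selected, so the algorithm never gets stuck. The only non-routine point to double-check is that the basic solution used in the $f$- and $g$-branches truly exists at each iteration; this is exactly the content of Theorems \ref{main_theorem1} and \ref{main_theorem2}, which preserve feasibility of the upper bounding vector $b$ from one iteration to the next, so a basic feasible solution to the updated BLPR is available whenever the loop condition holds.
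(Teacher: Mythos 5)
Your proof is correct and follows exactly the reasoning the paper leaves implicit (the paper states this corollary with no explicit proof, relying only on the preceding remark about the ellipsoid algorithm and the three theorems): the loop schedules at least one job per iteration, so it runs at most $M$ times, and each iteration reduces to solving one linear program plus bookkeeping. One small imprecision: an iteration of Algorithm \ref{g} may zero out $x_{pq'}$ rather than $x_{p'q}$, so the right bound on its inner loop is the number of nonzero variables, at most $M^* + 2k^* - 1$ by Corollary \ref{edges_bounding}, rather than $k$ --- still polynomial, so the conclusion stands.
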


The analysis of the performance of $IRA$ is simple with the help of \emph{upper bounding vector} $b$, noting that once a component of $b$ is increased, the machine will be no longer \emph{free}. We now show that $IRA$ is a $3$-approximation algorithm.
\begin{theorem}
  $IRA$ is a $3$-approximation algorithm.
\end{theorem}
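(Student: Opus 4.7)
The plan is to bound the makespan of the integer solution through the \emph{upper bounding vector} $b$ that $IRA$ maintains throughout its execution. At termination, $\mathcal{F}$ contains a complete assignment of every job and the final BLPR is feasible, so $\sum_{j:(i,j)\in\mathcal{F}} t_j \le b_i$ for every machine $i$; hence $\mathcal{A} \le \max_i b_i$, and it will suffice to prove $b_i \le 3\cdot\mathcal{OPT}$ when the loop terminates.

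First I would record two elementary lower bounds on $\mathcal{OPT}$. Because the natural relaxation $\Gamma$ is a relaxation of the IP $\Delta$, its optimum value is at most $\mathcal{OPT}$, so every component of the initial vector $y$ satisfies $y_i \le \mathcal{OPT}$; and because each job $j$ must be placed on some machine in any integer schedule, $t_j \le \mathcal{OPT}$ for every $j$.

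Next I would track how a given component $b_p$ can grow as $IRA$ runs. The $f$-transition leaves $b$ untouched, so the only sources of growth are $g$ and $h$. When $g$ is applied to a machine $p$ (which at that moment has free capacity exactly $1$), $b_p$ is increased by $t_q$ for the longest free job $q$, hence by at most $\mathcal{OPT}$; immediately afterwards the free capacity of $p$ drops to $0$, so $p$ leaves the set of free machines and is never revisited. When $h$ is applied, each remaining free machine receives at most two jobs, so its $b_p$ grows by at most $2\cdot\mathcal{OPT}$; and since $h$ schedules every remaining free job in a single call, $M^{\ast}=0$ afterwards and the loop terminates, so $h$ fires at most once.

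The key observation, and the step I expect to be the main obstacle to state cleanly, is that for any machine $p$ at most one of $g$ or $h$ ever modifies $b_p$: once $g$ touches $p$, $p$ is no longer free, so a subsequent $h$-call (if any) skips it; conversely if $h$ touches $p$, then $g$ never had the opportunity to modify it, since $p$ was still free at the instant $h$ fired. Combining this single-adjustment property with the initial value $b_p = y_p \le \mathcal{OPT}$ and with the per-step bounds $\mathcal{OPT}$ (for $g$) and $2\cdot\mathcal{OPT}$ (for $h$) derived above, we obtain $b_p \le 3\cdot\mathcal{OPT}$ at termination for every $p$, which yields $\mathcal{A} \le 3\cdot\mathcal{OPT}$ and completes the proof.
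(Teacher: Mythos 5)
Your proposal is correct and follows essentially the same route as the paper's own proof: both argue that each component $b_p$ starts at $y_p \le \mathcal{OPT}$, is increased at most once (by at most $t_q \le \mathcal{OPT}$ in a $g$-step or by at most two job lengths $\le 2\cdot\mathcal{OPT}$ in an $h$-step, since the machine ceases to be free afterwards), and that feasibility of the final BLPR bounds the integral makespan by $\max_p b_p \le 3\cdot\mathcal{OPT}$. Your write-up merely makes the ``touched at most once'' observation and the two lower bounds on $\mathcal{OPT}$ slightly more explicit than the paper does.
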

\begin{proof}
  Consider any machine $p$ with length $\mathcal{A}$ in the solution produced by $IRA$\enspace. Note that once $b_p$ is increased at Line \ref{case2} machine $p$ will no longer be \emph{free}. So exactly one of the following statements is true when the algorithm terminates:
  \begin{enumerate}
  \item $b_p$ hasn't been increased, then $b_p \leq y_p$\enspace;
  \item $b_p$ has been increased once at Line \ref{case2}, then $b_p \leq y_p + t_q$ for some $q$\enspace;
  \item $b_p$ has been increased once at Line \ref{case3}, then $b_p \leq y_p + t_{q_1} + t_{q_2}$ for some $q_1,q_2$\enspace.
  \end{enumerate}

  Note that $y_p$ and $\max_q\{t_{q}\}$ are two trivial lower bounds of $\mathcal{OPT}$\enspace. Also note that after the algorithm terminates, the integral solution produced by $IRA$, contained in $\mathcal{F}$, is also bounded by \emph{upper bounding vector} $b$\enspace. By definition of \emph{feasible upper bounding vector}, we have inequality
  \begin{equation}
    \mathcal{A} \leq b_p \leq 3\mathcal{OPT}
  \end{equation}
  as expected.
  \qed
\end{proof}

\section{Conclusion}

In this paper, we consider the SMCC problem, a uniform variation of general scheduling problem, which has capacity constraints on identical machines. Using an extension of \emph{Iterative Rounding Method} introduced by Jain \cite{Jai:01}, we obtain a $3$-approximation algorithm. This is the first attempt to use \emph{Iterative Rounding Method} in scheduling problem and it shows the power of \emph{Iterative Rounding Method}. It is still unknown that whether the approximation ratio can be improved or whether the \emph{Iterative Rounding Method} can be used to obtain a good approximation algorithm for the non-uniform version of scheduling problem with capacity constraints.


\begin{thebibliography}{[MT1]}

\bibitem{Ang:Bam:Kon:01}
Angel, E., Bampis, E. and Kononov, A.:
A FPTAS for Approximating the Unrelated Parallel Machines Scheduling Problem with Costs.
In: 9th Annual European Symposium on Algorithms, pp.194-205. August 28-31, (2001).
LNCS, vol. 2161, Springer-Verlag, London, 194-205, (2001)

\bibitem{Gai:Mon:Woc:07}
Gairing, M., Monien, B., and Woclaw, A.:
A faster combinatorial approximation algorithm for scheduling unrelated parallel machines.
Theor. Comput. Sci. 380, 1-2 (Jun. 2007), 87-99. (2007)

\bibitem{Hoc:Shm:87}
Hochbaum, D. S., and Shmoys, D. B.:
Using dual approximation algorithms for scheduling problems: theoretical and practical results,
J. ACM 34, 144-162, (1987)

\bibitem{Jai:01}
K. Jain:
A factor 2 approximation algorithm for the generalized Steiner network problem,
Combinatorica, 21, pp.39-60, (2001)

\bibitem{Kha:79}
L. G. Khachiyan.:
A polynomial algorithm for linear programming (in Russian).
Doklady Akademiia Nauk USSR 244, 1093–1096, (1979).
A translation appears in: Soviet Mathematics Doklady 20, 191–194, (1979)

\bibitem{Len:Shm:Tar:90}
Lenstra, J. K., Shmoys, D. B., and Tardos, \'{E}.:
Approximation algorithms for scheduling unrelated parallel machines.
Math. Programming 46, 259-271, (1990)

\bibitem{Moh:Lap:07}
Mohit Singh, and Lap Chi Lau.:
Approximating minimum bounded degree spanning trees to within one of optimal.
In: The thirty-ninth annual ACM Symposium on Theory of Computing, San Diego, California, USA, pp.661-670, (2007)

\bibitem{Tsai:92}
L. H. Tsai.:
Asymptotic Analysis of an Algorithm for Balanced Parallel Processor Scheduling.
SIAM Journal on Computing, 21, pp.59-94, (1992)

\bibitem{Woe:05}
Gerhard J. Woeginger.:
A comment on scheduling two parallel machines with capacity constraints.
Discrete Optimization, Volume 2, Issue 3, September 2005, pp.269-272, (2005)

\bibitem{Yang:Ye:Zhang:03}
Heng Yang, Yinyu Ye, and Jiawei Zhang.:
An approximation algorithm for scheduling two parallel machines with capacity constraints.
Discrete Applied Mathematics, Volume 130, Issue 3, pp.449-467, (2003)

\bibitem{Zhang:Ye:01}
Jiawei Zhang, and Yinyu Ye.:
On the Budgeted MAX-CUT problem and its Application to the Capacitated Two-Parallel Machine Scheduling.
Working Paper, Department of Management Sciences, The University of Iowa, (2001)

\bibitem{Ets:Mat:85}
Maximilian M. Etschmaier, and Dennis F. X. Mathaisel.:
Airline Scheduling: An Overview.
Transportation Science, Vol.19, No.2, May 1985, pp.127-138, (1985)

\bibitem{Rus:Hof:Pad:95}
R.A. Rushmeier, K.L. Hoffman, and M. Padberg.:
Recent Advances in Exact Optimization of Airline Scheduling Problems.
Technical Report, George Mason University, (1995)

\end{thebibliography}
\end{document}